\newtheorem{lemma}{Lemma}
\newtheorem{theorem}{Theorem}
\newtheorem{definition}{Definition}
\newcommand{\mc}{\mathcal}
\newcommand{\PP}{\mathcal{P}}
\newcommand{\QQ}{\mathcal{Q}}
\newcommand{\E}{\mathbb{E}}
\newcommand{\R}{\mathbb{R}}
\newcommand{\N}{\mathbb{N}}
\newcommand{\supp}{\mathrm{supp}\,}
\newcommand{\argmin}{\ensuremath{\operatorname{argmin}}}
\newcommand\blfootnote[1]{%
  \begingroup
  \renewcommand\thefootnote{}\footnote{#1}%
  \addtocounter{footnote}{-1}%
  \endgroup
}
\begin{document}
\title{Single-Letter Characterization of the Mismatched Distortion-Rate Function}


\author{%
  \IEEEauthorblockN{Ma\"{e}l Le~Treust}
  \IEEEauthorblockA{Univ. Rennes, CNRS, Inria, IRISA UMR 6074\\
                    35000 Rennes, France\\
                    Email: mael.le-treust@cnrs.fr}     
  \and
  \IEEEauthorblockN{Tristan Tomala}
  \IEEEauthorblockA{HEC Paris\\ 
                    78351 Jouy-en-Josas, France\\
                    Email: tomala@hec.fr}
}

\maketitle

\blfootnote{The work of Ma\"{e}l Le Treust is supported in part by PEPR NF FOUNDS ANR-22-PEFT-0010. Tristan Tomala gratefully acknowledges the support of the HEC foundation and ANR/Investissements d'Avenir under grant ANR-11-IDEX-0003/Labex Ecodec/ANR-11-LABX-0047.}  


\begin{abstract}
The mismatched distortion-rate problem has remained open since its formulation by Lapidoth in 1997. In this paper, we characterize the mismatched distortion-rate function. Our single-letter solution highlights the adequate conditional distributions for the encoder and the decoder. The achievability result relies on a time-sharing argument that allows to convexify the upper bound of Lapidoth. We show that it is sufficient to consider two regimes, one with a large rate and another one with a small rate. Our main contribution is the converse proof. Suppose that the encoder selects a single-letter conditional distribution distinct from the one in the solution, we construct an encoding strategy that leads to the same expected cost for both encoder and decoder. This ensures that the encoder cannot gain by changing the single-letter conditional distribution. This argument relies on a careful identification of the sequence of auxiliary random variables. By building on Caratheodory's Theorem we show that the cardinality of the auxiliary random variables is equal to the one of the source alphabet plus three.

\end{abstract}

\section{Introduction}

In the mismatched distortion-rate problem, the encoder optimizes a distortion function which differs from the one of the decoder. Formulated by Lapidoth in \cite{Lapidoth97}, the main difficulty comes from the fact that the decisions are taken separately. This problem differs from Shannon's distortion-rate problem \cite{Shannon59} where the distortion function may have several components, and both  encoding and decoding functions are selected jointly. Here, the encoder and the decoder decide in a decentralized manner which coding functions to implement, and they optimize distinct cost functions. 
The use of Game Theory is thus relevant. It gives us new insights to solve the mismatched distortion-rate problem. 


By selecting in advance the decoding function, the decoder  determines the codebook. Indeed, the decoder can {remove} a codeword simply by producing the same decoder output as for another codeword. The encoder knows the decoding function before the transmission starts. Once it observes the source sequence, it selects the optimal codeword for its own cost function. This reaction is anticipated by the decoder 
when selecting the decoding function. This defines a Stackelberg game \cite{stackelberg-book-2011} in which the decoder is the leader that plays first, and the encoder is the follower that plays second \cite{VoraKulkarni_ISIT20}. 

When there is no communication constraint, this problem coincides with the mechanism design problem solved by Jackson and Sonnenschein in \cite{JacksonSonnenschein07}. A large communication rate clearly benefits to the decoder. According to the \emph{revelation principle}, see \cite[Chap.~7.2]{tirole-book-1991}, it is optimal that the decoding scheme incentivizes the encoder to communicate perfectly
the source sequence to the decoder. However, with restrictions in the communication, the source cannot be perfectly transmitted to the decoder, and thus the revelation principle breaks down. 

\vspace{-0.45cm}

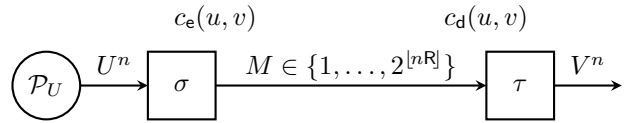
\begin{figure}[ht!]
\begin{center}
\begin{tikzpicture}[xscale=0.9,yscale=0.9]
\draw [thick] (0,0.5) circle (0.5) ;
\draw [thick,>=stealth,->] (0.5,0.5) -- (1.5,0.5) ;
\draw [thick] (1.5,0) rectangle (2.5,1);
\draw [thick,>=stealth,->] (2.5,0.5) -- (6.5,0.5) ;
\draw [thick] (6.5,0) rectangle (7.5,1);
\draw [thick,>=stealth,->] (7.5,0.5) -- (8.5,0.5) ;
\draw (0,0.5) node {$\PP_U$} ;
\draw (2,0.5) node {$\sigma$} ;
\draw (7,0.5) node {$\tau$} ;
\draw  (1,0.8) node {$U^n$} ;
\draw  (4.5,0.8) node {$M\in\{1,\ldots,2^{\lfloor \!n \mathsf{R}\!\rfloor}\}$} ;
\draw   (8,0.8) node {$V^n$} ;
\draw   (2.5,1.5) node {$c_{\mathsf{e}}(u,v)$} ;
\draw   (6.5,1.5) node {$c_{\mathsf{d}}(u,v)$} ;
\end{tikzpicture}
\end{center}
\vspace{-0.35cm}
\caption{The mismatched distortion-rate problem.}\label{fig:Problem}
\end{figure}

\vspace{-0.25cm}

This problem is fundamental to understand situations in which the decoding rule is selected exogenously, for example in learning algorithms. Kanabar and Scarlett use superposition coding and expurgated parallel coding to derive new achievability bounds in \cite{KanabarScarlett24}. 
The authors of \cite{ScarlettGuillenSomekhMartinez_FTCIT20} pointed out a list of open questions covering the mismatched distortion-rate problem and the mismatched channel coding problem, for which converse bounds are provided in \cite{Somekh22} and \cite{KangarshahiFabregas21}. In \cite{LeTreustTomala_ISIT24}, we provide a single-letter converse bound that matches Lapidoth's achievability bound in several cases. 

In this paper, we characterize the single-letter mismatched distortion-rate function. The achievability proof relies on the time-sharing argument that convexifies the achievability bound of Lapidoth. We show that it is sufficient to consider two regimes, one with a large rate and another one with a small rate. The main novelty lies in the converse proof. For each component of the source and decoder output symbols $(U_t,V_t)$ with $t\in\{1,\ldots,n\}$, we identify  the auxiliary random variable $W_t  = (M,U^{t-1})$. Suppose that the encoder selects a single-letter conditional distribution distinct from the one in the solution, we construct an encoding strategy that leads to the same expected cost for the encoder and decoder. Therefore, the encoder cannot gain by changing the single-letter conditional distribution. We build on Caratheodory's Theorem  to bound the cardinality of each of the auxiliary random variables $(W_t)_{t}$ by the cardinality of the source alphabet plus three.




\section{System Model} 

A memoryless source takes values in the finite set $\mc{U}$, according to the probability distribution $\PP_U\in\Delta(\mc{U})$. The notation $\Delta(\cdot)$ stands for the set of probability distributions. The encoder and decoder optimize two distinct cost functions
\begin{align}
c_{\mathsf{e}}: \mc{U}\times \mc{V} \rightarrow \R,\qquad
c_{\mathsf{d}}: \mc{U}\times \mc{V} \rightarrow \R,\label{eq:distortion}
\end{align}
where $\mc{V}$ denotes the finite set of decoder output symbols.

\begin{definition}\label{def:Code}
Given $\mathsf{R}\geq0$ and $n\in \N^{\star}$, the stochastic encoding and the decoding functions are defined by
\begin{align}
\sigma:&\;\,\mc{U}^{n} \longrightarrow \Delta\big(\{1,\ldots,2^{\lfloor \!n \mathsf{R}\!\rfloor}\}\big),\\
\tau :&\;\, \{1,\ldots,2^{\lfloor \!n \mathsf{R}\!\rfloor}\}   \longrightarrow  \Delta( \mc{V}^n).
\end{align}
\end{definition}

The pair $(\sigma,\tau)$ induces the probability distribution $\PP_{U^nMV^n}^{\sigma,\tau}$. We denote the encoder long-run cost function by
\begin{align}
c^n_{\mathsf{e}}(\sigma,\tau) = \sum_{u^n,v^n}\PP_{U^nV^n}^{\sigma,\tau}\big(u^n,v^n \big) \cdot  \Bigg[  \frac{1}{n} \sum_{t=1}^n c_{\mathsf{e}}(u_t,v_t) \Bigg].
\end{align}
Similarly, the decoder long-run cost function writes $c^n_{\mathsf{d}}(\sigma,\tau)$.

\begin{definition}\label{def:EncBR} 
Given $\mathsf{R}\geq0$ and $n\in \N^{\star}$, we define\\
1.  the set of encoder best responses to the decoding $\tau$ 
\begin{align}
\textsf{BR}_{\mathsf{e}}(\tau) =&\underset{\sigma}{\argmin}\; c_{\mathsf{e}}^{n}(\sigma, \tau),
\end{align}
2. the $n$-optimal decoder cost with pessimistic tie-breaking rule
\begin{align}
C_{\mathsf{d}}^n(\mathsf{R})=\inf_{\tau}\max_{\sigma  \in \textsf{BR}_{\mathsf{e}}(\tau)} c_{\mathsf{d}}^{n}(\sigma, \tau). \label{eq:MDProblem}
\end{align}
\end{definition}

The mismatched distortion-rate function is defined by
\begin{align}
C_{\mathsf{d}}^{\star}(\mathsf{R}) = \lim_{n\to+\infty} C_{\mathsf{d}}^n(\mathsf{R})= \inf_{n\in\N^{\star}}C_{\mathsf{d}}^n(\mathsf{R}).
\end{align}
The limit coincides with the infimum according to Fekete's Lemma for the subadditive sequence $(n\cdot C_{\mathsf{d}}^n(\mathsf{R}))_{n\in \N^{\star}}$, see \cite[Lemma~4]{LeTreustTomala_ISIT24}.

\begin{theorem}\label{theo:main}
The mismatched distortion-rate function writes
\begin{align}
C_{\mathsf{d}}^{\star}(\mathsf{R}) = \min_{\alpha\in[0,1], \mathsf{R}_1\geq0, \mathsf{R}_2\geq0,\atop \alpha \mathsf{R}_1 +(1- \alpha) \mathsf{R}_2 \leq \mathsf{R}} \alpha C(\mathsf{R}_1) + (1- \alpha)C(\mathsf{R}_2).
\end{align}
It is the convexification, i.e. the greatest convex minorant, of 
\begin{align}
C(\mathsf{R}) =& \inf_{\QQ_{WV}}\; \; \max_{\QQ_{W|U}\in \mathbb{P}(\QQ_{WV},\mathsf{R})}\; \E\Big[ c_{\mathsf{d}}(U,V) \Big],\label{eq:DefinitionFunctionC}
\end{align}
where $W\in\mc{W}$ is an auxiliary random variable with $|\mc{W}|= |\mc{U}|+3$. The set of \emph{compatible distributions} $ \mathbb{P}(\QQ_{WV},\mathsf{R})$, and the set of \emph{encoder's deviations}  $\mathbb{D}(\QQ_{W},\mathsf{R})$ are defined by
\begin{align}
\mathbb{P}(\QQ_{WV},\mathsf{R}) =&  \quad \underset{\PP_{W|U}\in \mathbb{D}(\QQ_{W},\mathsf{R})}{\argmin} \;\;\, \E_{\QQ}[  c_{\mathsf{e}}(U,V) ], \\
\mathbb{D}(\QQ_{W},\mathsf{R}) =& \;\; \Big\{ \PP_{W|U}\in \Delta( \mc{W})^{|\mc{U}|},
\;\; I_{\PP}(U;W)   \leq  \mathsf{R},\nonumber\\
&\quad\quad  \;\;\sum_u\PP_U(u)\PP_{W|U}(\cdot|u) = \QQ_{W} \Big\}.
\end{align}
\end{theorem}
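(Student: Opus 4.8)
The argument splits into achievability and converse. For achievability, I would start from Lapidoth's single-letter upper bound, which I expect gives $C(\mathsf{R})$ as defined in \eqref{eq:DefinitionFunctionC}: the decoder commits to a test channel $\QQ_{WV}$ (equivalently a codebook structure), the encoder best-responds within the mutual-information constraint $I(U;W)\le\mathsf{R}$, and the resulting decoder cost is $\E_{\QQ}[c_{\mathsf{d}}(U,V)]$. The time-sharing step is then routine: given two operating points $(\mathsf{R}_1,\mathsf{R}_2)$ and a fraction $\alpha$, split the block of length $n$ into a sub-block of length $\alpha n$ coded at rate $\mathsf{R}_1$ and one of length $(1-\alpha)n$ coded at rate $\mathsf{R}_2$, so the total rate is $\alpha\mathsf{R}_1+(1-\alpha)\mathsf{R}_2\le\mathsf{R}$ and the decoder cost is the convex combination $\alpha C(\mathsf{R}_1)+(1-\alpha)C(\mathsf{R}_2)$. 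The claim that only two regimes are needed (one large-rate, one small-rate) follows because the lower convex envelope of a scalar function on $[0,\infty)$ is attained by convex combinations of at most two points (Caratheodory in dimension one); one should check the boundary/tie-breaking issues so the $\min$ is actually attained, using continuity of $C$ and compactness after the cardinality bound below. The cardinality bound $|\mc{W}|=|\mc{U}|+3$ comes from a Caratheodory/Fenchel–Eggleston argument: the quantities that must be preserved when replacing $\QQ_{W|U}$ by a finitely-supported version are the $|\mc{U}|$ marginal constraints defining $\mathbb{D}(\QQ_W,\mathsf{R})$, plus the value of $I(U;W)$, the encoder cost $\E_{\QQ}[c_{\mathsf{e}}]$, and the decoder cost $\E_{\QQ}[c_{\mathsf{d}}]$ — three extra scalars — so each conditional law $\QQ_{W|U}(\cdot|u)$ can be taken supported on $|\mc{U}|+3$ points of $\mc{W}$.

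\textbf{The converse.} This is the substantive part. Fix an arbitrary sequence of codes $(\sigma,\tau)$ of blocklength $n$ and set $W_t=(M,U^{t-1})$ for $t\in\{1,\dots,n\}$, with $V_t$ the $t$-th decoder output. The plan is: (i) the rate constraint gives $\lfloor n\mathsf{R}\rfloor\ge H(M)\ge I(U^n;M)=\sum_t I(U_t;M|U^{t-1})=\sum_t I(U_t;W_t)$, using that the source is memoryless so $U_t$ is independent of $U^{t-1}$ and $W_t=(M,U^{t-1})$; hence the empirical average of $I(U_t;W_t)$ over $t$ is at most $\mathsf{R}$. (ii) For a uniform time index $T$, the triple $(U_T,W_T,V_T)$ has a joint law $\QQ$ with $\QQ_{U}=\PP_U$ and $\QQ_{V_T|W_T}$ a genuine channel (since $V_t$ is a function of $M$, hence of $W_t$ together with later randomness — one must check $U_t - W_t - V_t$ or handle the dependence carefully; this is where the structure $W_t=(M,U^{t-1})$ and the Markov chain $U_t-(M,U^{t-1})-(\text{rest})$ is used). (iii) The decoder cost is $\frac{1}{n}\sum_t\E[c_{\mathsf{d}}(U_t,V_t)]=\E_{\QQ}[c_{\mathsf{d}}(U,V)]$, and similarly for $c_{\mathsf{e}}$. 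The crux is the game-theoretic reduction: because $\sigma\in\textsf{BR}_{\mathsf{e}}(\tau)$, the encoder's conditional law $\QQ_{W|U}$ (induced per-letter) must minimize $\E_{\QQ}[c_{\mathsf{e}}]$ over all deviations in $\mathbb{D}(\QQ_W,\mathsf{R})$ — i.e. it lies in $\mathbb{P}(\QQ_{WV},\mathsf{R})$; the key lemma here, flagged in the abstract, is that \emph{any} single-letter deviation $\PP_{W|U}\in\mathbb{D}(\QQ_W,\mathsf{R})$ can be lifted to an actual $n$-block encoding strategy in $\textsf{BR}_{\mathsf{e}}(\tau)$ achieving the same per-letter encoder \emph{and} decoder costs, so the encoder is indifferent and the decoder's pessimistic tie-break forces the worst such deviation. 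This yields $C_{\mathsf{d}}^n(\mathsf{R})\ge \inf_{\QQ_{WV}}\max_{\QQ_{W|U}\in\mathbb{P}(\QQ_{WV},\mathsf{R})}\E_{\QQ}[c_{\mathsf{d}}]$ for the ``pre-convexification'' quantity; combined with time-sharing on the code side (a length-$n$ code may itself time-share), the liminf over $n$ produces the convexification $C_{\mathsf{d}}^\star(\mathsf{R})$.

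\textbf{Main obstacle.} I expect the hard step to be the lifting lemma in (iii) of the converse: given that the decoder has committed to $\tau$ and the encoder's equilibrium induces marginal $\QQ_W$, one must show that for \emph{every} alternative single-letter channel $\PP_{W|U}$ with $I_\PP(U;W)\le\mathsf{R}$ and the same $W$-marginal, there is an $n$-letter best-response encoding realizing it — this requires a covering/codebook argument (typicality) to simulate $\PP_{W|U}$ within the committed codebook without changing the decoder's output statistics, and careful bookkeeping so that the constructed strategy is genuinely cost-minimizing for the encoder (hence in $\textsf{BR}_{\mathsf{e}}(\tau)$) rather than merely feasible. Handling the pessimistic tie-breaking rule and the passage from $C_{\mathsf{d}}^n$ to its limit while interchanging the $\inf_n$ with the convexification (via Fekete's Lemma, already available) is the remaining delicate point; I would treat the large-rate and small-rate regimes separately there, matching the two-regime structure of the achievability side.
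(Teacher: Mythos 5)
Your overall architecture matches the paper: the achievability by time-sharing Lapidoth's bound with a one-dimensional convexification ($|\mc{T}|=2$), and a converse built on the identification $W_t=(M,U^{t-1})$, the rate bound $\sum_t \frac1n I(U_t;W_t)\le\mathsf{R}$, the Markov chain $U_t -\!\!\minuso\!\!- W_t -\!\!\minuso\!\!- V_t$, and the game-theoretic step showing the induced per-letter conditional lies in $\mathbb{P}(\QQ_{W_tV_t},\mathsf{R}_t)$ with the pessimistic tie-break. However, at exactly the step you flag as the main obstacle there is a genuine gap: you propose to lift a single-letter deviation $\PP_{W|U}\in\mathbb{D}(\QQ_W,\mathsf{R})$ to an $n$-block encoding via a covering/typicality simulation ``within the committed codebook.'' That tool does not fit this problem: the converse must hold for every fixed $n$ and every fixed (arbitrary) $\tau$, the deviation must preserve the \emph{exact} marginal $\QQ^{\sigma}_{W_t}$ of $W_t=(M,U^{t-1})$ at a single coordinate $t$ (not an i.i.d. per-letter channel over the whole block), and a covering argument is asymptotic, needs rate slack, and would only approximately reproduce the decoder's statistics. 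The paper's proof is instead exact and non-asymptotic: it builds the joint law $\PP_{U^nW^n}=\QQ^{\sigma}_{U^{t-1}U_{t+1}^nW^{t-1}W_{t+1}^n}\otimes\PP_{U_t}\PP_{W_t|U_t}$, which changes only the $U_t$--$W_t$ correlation while keeping $\PP_{U^n}=\PP_U^{\otimes n}$ and all $W_{t'}$-marginals, and then reads off a bona fide strategy by the Bayes/chain-rule formula $\widetilde{\sigma}(m|u^n)=\frac{\QQ^{\sigma}_{M}(m)}{\PP^{\otimes n}_U(u^n)}\prod_{t'}\PP_{U_{t'}|MU^{t'-1}}(u_{t'}|m,u^{t'-1})$; the cost identity then follows from $U^{t'-1}-\!\!\minuso\!\!-M-\!\!\minuso\!\!-V^n$ and the preserved marginals, with no typicality anywhere. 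Note also a logical point you get slightly wrong: for the argmin (membership in $\mathbb{P}$) part the lifted strategy need only be \emph{feasible} — since $\sigma\in\textsf{BR}_{\mathsf{e}}(\tau)$, its cost lower-bounds that of $\widetilde{\sigma}$, which is all that is needed; the best-response property of $\widetilde{\sigma}$ is only required for the tie-breaking direction, where it is automatic because a deviation in $\mathbb{P}$ yields the same (minimal) encoder cost.

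A secondary caveat: your cardinality argument ($|\mc{W}|=|\mc{U}|+3$) is sketched as a standard support lemma preserving $|\mc{U}|$ marginals plus three scalars. The subtlety is that one must preserve not just expected values but the \emph{optimization structure} — the argmin set $\mathbb{P}(\QQ_{WV},\mathsf{R})$ and the inner max under the pessimistic tie-break. The paper handles this by writing the encoder's problem as an ordinary convex program, invoking KKT/saddle-point conditions, and checking that the same optimal $p^{\star}$ with the same multipliers remains a saddle point after the Caratheodory-type reweighting $\widetilde{\lambda}=\lambda-\gamma\mu$; a plain Fenchel--Eggleston reduction on expectations alone would not justify that the reduced $\QQ_{WV}$ induces the same best-response set and the same worst-case decoder cost.
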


The achievability proof is obtained by using a time-sharing argument for the coding scheme of \cite[Sec.~II]{Lapidoth97}, see also \cite[Sec.~4.5]{ScarlettGuillenSomekhMartinez_FTCIT20} and \cite[Sec.~III]{LeTreustTomala_ISIT24}. We focus on the converse proof.

\section{Converse Proof of Theorem~\ref{theo:main}}

\subsection{Identification of the auxiliary random variables}

We fix $n \in \N^{\star}$, we consider a decoding function $\tau$ and an encoding function $\sigma \in \textsf{BR}_{\mathsf{e}}(\tau)$ that achieves the maximum in
\begin{align}
\max_{\sigma  \in \textsf{BR}_{\mathsf{e}}(\tau)} c_{\mathsf{d}}^{n}(\sigma, \tau).\label{eq:MaxTie}
\end{align}

Let $\mc{T} =\{1,\ldots,n\}$, for all $t\in\{1,\ldots,n\}$ we identify the coefficients $\alpha_t = \frac1n$,  the auxiliary random variables, the rates
\begin{align}
W_t  =& \; (M, U^{t-1}), \label{eq:DefAuxRV}\\
\mathsf{R}_t  =& \;  I(U_t;W_t) = I(U_t;M,U^{t-1}).\label{eq:RateW}
\end{align}

We denote by $\QQ^{\sigma,\tau}_{U^nW^nV^n} $ the distribution of the sequences $(U^n, W^n,V^n)$ induced by $(\sigma,\tau)$, and we denote by $\QQ^{\sigma,\tau}_{U_tW_tV_t}$ the marginal for all $t$. 
We define the average distribution by
\begin{align}
\QQ^{\sigma,\tau}_{UWV} = \frac1n\sum_{t =1}^n \QQ^{\sigma,\tau}_{U_tW_tV_t}.\label{eq:Distr}
\end{align}

In this proof, we consider the above $\mc{T}$, $(\alpha_t)_{t\in\mc{T}}$, $(\mathsf{R}_t)_{t\in\mc{T}}$, $(U_t,W_t,V_t)_{t\in\mc{T}}$, and we show that $\QQ^{\sigma,\tau}_{UWV} $ belongs to the set
\begin{align}
\mathbb{Q} =  \Bigg\{& \PP_U\QQ_{V|U},\;  \exists \mc{T}, \exists (\alpha_t)_{t\in\mc{T}},  \exists ( \mathsf{R}_t)_{t\in\mc{T}}, \exists (U_t,W_t,V_t)_{t\in\mc{T}} ,\nonumber\\
&\forall t\in \mc{T}, \quad U_t  -\!\!\!\!\minuso\!\!\!\!- W_t  -\!\!\!\!\minuso\!\!\!\!- V_t, \;\;\QQ_{U_t} = \PP_U,\label{eq:SetCond_2}\\
& \forall t\in \mc{T}, \quad \QQ_{W_t|U_t}\in \mathbb{P}(\QQ_{W_tV_t},\mathsf{R}_t),\text{ and }  \nonumber\\
&\quad\E_{\QQ}\Big[ c_{\mathsf{d}}(U_t,V_t) \Big] = \!\!\! \max_{\QQ_{W|U}\in \mathbb{P}(\QQ_{W_tV_t},\mathsf{R}_t)}\E\Big[ c_{\mathsf{d}}(U_t,V_t) \Big],  \label{eq:SetCond_3}\\
&\sum_{t\in\mc{T}}\alpha_t \mathsf{R}_t \leq  \mathsf{R},\qquad  \sum_{t\in\mc{T}}\alpha_t =1,\label{eq:SetCond_4}\\
&\PP_U\QQ_{V|U} = \sum_t \alpha_t \sum_{w_t} \PP_{U_t}\QQ_{W_t|U_t}(w_t|\cdot)\QQ_{V_t|W_t}(\cdot|w_t)\Bigg\}.\nonumber
\end{align}
Lemma~\ref{lemma:MarkovMarginal}, Lemma~\ref{lemma:UnprofitableDeviation}, Lemma~\ref{lemma:rate} below, show that $\QQ^{\sigma,\tau}_{U_tW_tV_t}$ satisfies  \eqref{eq:SetCond_2}, \eqref{eq:SetCond_3}, \eqref{eq:SetCond_4} for all $t\in\mc{T}$. Finally, we have 
\begin{align}
&\max_{\sigma  \in \textsf{BR}_{\mathsf{e}}(\tau)} c_{\mathsf{d}}^{n}(\sigma, \tau) 
= \E_{\QQ^{\sigma,\tau}}\Big[ c_{\mathsf{d}}(U,V) \Big]\label{eq:ConverseFinal1}\\
&= \sum_{t=1}^n \alpha_t\max_{\QQ_{W|U}\in \mathbb{P}(\QQ^{\tau}_{W_tV_t},\mathsf{R}_t)}\E\Big[ c_{\mathsf{d}}(U_t,V_t) \Big]\label{eq:ConverseFinal3}\\
&\geq \sum_{t=1}^n \alpha_t\inf_{\QQ_{WV}}\max_{\QQ_{W|U}\in \mathbb{P}(\QQ_{WV},\mathsf{R}_t)}\E\Big[ c_{\mathsf{d}}(U_t,V_t) \Big]\label{eq:ConverseFinal4}\\
&= \sum_{t=1}^n \alpha_t C(\mathsf{R}_t)\label{eq:ConverseFinal5}\\
&\geq \min_{(\tilde{\alpha}_t)_t\geq0,\atop (\tilde{\mathsf{R}}_t)_t\geq0}\bigg\{\sum_{t=1}^n {\tilde{\alpha}}_t  C(\tilde{\mathsf{R}}_t)\; \bigg|\;  \sum_{t=1}^n \tilde{\alpha}_t = 1,\;  \sum_{t=1}^n\tilde{\alpha}_t \tilde{\mathsf{R}}_t \leq \mathsf{R}\bigg\}.\label{eq:ConverseFinal6}
\end{align}
Equations \eqref{eq:ConverseFinal1}-\eqref{eq:ConverseFinal3} come from the definition of $\QQ^{\sigma,\tau}_{UWV}$ in \eqref{eq:Distr} and the properties of $(\sigma,\tau)$ in \eqref{eq:MaxTie}. 

It remains to prove the cardinality bounds for the auxiliary random variables $|\mc{T}|\leq 2$ and $|\mc{W}|\leq |\mc{U}| +3$, in Lemma~\ref{lemma:CardT} and Lemma~\ref{lemma:CardW}. This conclude the converse proof of Theorem~\ref{theo:main}.

\subsection{Distribution $\QQ^{\sigma,\tau}_{UWV} $ belongs to the set $\mathbb{Q}$}\label{sec:Lemma}

\begin{lemma}\label{lemma:MarkovMarginal}
For all $t\in\{1,\ldots,n\}$,  $\QQ^{\sigma,\tau}_{U_tW_tV_t}$ satisfies
\begin{align}
U_t  -\!\!\!\!\minuso\!\!\!\!- W_t  -\!\!\!\!\minuso\!\!\!\!- V_t \quad\text{ and } \quad \PP_{U_t} = \PP_U.
\end{align}
\end{lemma}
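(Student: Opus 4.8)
The plan is to verify both assertions directly from the definitions, exploiting the structure of the code $(\sigma,\tau)$ and the memoryless source. The marginal identity $\PP_{U_t} = \PP_U$ is immediate: the source is memoryless with law $\PP_U$, so each coordinate $U_t$ of $U^n$ has law $\PP_U$ under $\PP_{U^n}^{\sigma,\tau}$, and hence under the induced marginal $\QQ^{\sigma,\tau}_{U_t}$. This holds for every $t$ regardless of the choice of $(\sigma,\tau)$.

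For the Markov chain $U_t -\!\!\!\!\minuso\!\!\!\!- W_t -\!\!\!\!\minuso\!\!\!\!- V_t$ with $W_t = (M,U^{t-1})$, the key observation is that the decoder output $V^n$ depends on $(U^n,M)$ only through the message $M$: by Definition~\ref{def:Code}, $\tau$ maps $M$ alone to $\Delta(\mc{V}^n)$, so conditionally on $M$ the pair $(U^n,M)$ and $V^n$ are independent, i.e. $V^n -\!\!\!\!\minuso\!\!\!\!- M -\!\!\!\!\minuso\!\!\!\!- U^n$. In particular $V_t -\!\!\!\!\minuso\!\!\!\!- M -\!\!\!\!\minuso\!\!\!\!- (U^{t-1},U_t)$. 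Since $W_t = (M,U^{t-1})$ is a function of $(M,U^{t-1})$ that contains $M$, conditioning further on $U^{t-1}$ preserves the independence of $V_t$ from $U_t$; formally, $\QQ^{\sigma,\tau}_{V_t|W_t U_t}(v_t \mid m, u^{t-1}, u_t) = \QQ^{\sigma,\tau}_{V_t|M}(v_t \mid m) = \QQ^{\sigma,\tau}_{V_t|W_t}(v_t \mid m, u^{t-1})$, which is exactly the claimed Markov property.

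There is no serious obstacle here; the only point requiring mild care is the direction of the chain. One must note that $W_t$ sits ``between'' $U_t$ and $V_t$ in the sense that $V_t$ is conditionally independent of $U_t$ given $W_t$ — this is precisely because $W_t$ retains the full message $M$ on which $\tau$ acts, while the source coordinates $U^{t-1}$ added to $W_t$ are harmless since $V^n$ is conditionally independent of $U^n$ given $M$. The memorylessness of the source is not even needed for the Markov part; it is used only for the marginal identity $\PP_{U_t}=\PP_U$.
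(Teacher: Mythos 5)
Your proposal is correct and follows essentially the same route as the paper: both start from the Markov chain $U^n -\!\!\!\!\minuso\!\!\!\!- M -\!\!\!\!\minuso\!\!\!\!- V^n$ induced by the fact that $\tau$ acts on $M$ alone, and then specialize to coordinate $t$ while moving $U^{t-1}$ into the conditioning, the paper phrasing this via $I(U^n;V^n|M)=0 \Rightarrow I(U_t;V_t|M,U^{t-1})=0$ and you via the equivalent identity on conditional distributions (weak union). The marginal claim $\PP_{U_t}=\PP_U$ is handled identically, as an immediate consequence of the memoryless source.
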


\begin{proof}[Lemma \ref{lemma:MarkovMarginal}]
The system model induces the Markov chain $U^n  -\!\!\!\!\minuso\!\!\!\!- M  -\!\!\!\!\minuso\!\!\!\!- V^n$. Therefore, 
\begin{align}
I(U^n;V^n|M) = 0 \quad \Longrightarrow& \quad I(U_t,U^{t-1};V_t|M) = 0 \\
\quad \Longrightarrow& \quad I(U_t;V_t|M,U^{t-1}) = 0\\
\Longleftrightarrow& \quad I(U_t;V_t|W_t) = 0.
\end{align}
This concludes the proof of Lemma \ref{lemma:MarkovMarginal}.
\end{proof}

The Markov chain induces $\QQ^{\sigma,\tau}_{U_tW_tV_t}= \PP_U \QQ^{\sigma}_{W_t|U_t} \QQ^{\tau}_{V_t|W_t}$.
\begin{lemma}\label{lemma:UnprofitableDeviation}
For all $ t\in\{1,\ldots,n\}$, we have
\begin{align}
&\QQ^{\sigma}_{W_t|U_t}\in \mathbb{P}(\QQ^{\tau}_{W_tV_t},\mathsf{R}_t),\text{ and }\\
&\E_{\QQ^{\sigma,\tau}}\Big[ c_{\mathsf{d}}(U_t,V_t) \Big] = \!\!\! \max_{\QQ_{W|U}\in \mathbb{P}(\QQ^{\tau}_{W_tV_t},\mathsf{R}_t)}\E\Big[ c_{\mathsf{d}}(U_t,V_t) \Big] .
\end{align}
\end{lemma}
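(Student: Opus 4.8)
The plan is to establish the two assertions in turn, each by a one‑coordinate perturbation of the encoder: the first uses that $\sigma$ is an encoder best response, the second that $\sigma$ attains the pessimistic maximum in \eqref{eq:MaxTie}. Membership in the feasible set is immediate: averaging $\QQ^{\sigma}_{W_t|U_t}$ against $\PP_U$ returns the marginal $\QQ^{\tau}_{W_t}$ of $W_t$ (the $W_t$‑marginal of $\QQ^{\tau}_{W_tV_t}$), and $I_{\QQ^{\sigma}}(U_t;W_t)=\mathsf{R}_t$ by \eqref{eq:RateW}, so $\QQ^{\sigma}_{W_t|U_t}\in\mathbb{D}(\QQ^{\tau}_{W_t},\mathsf{R}_t)$. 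The content is therefore (a) that $\QQ^{\sigma}_{W_t|U_t}$ minimises $\PP_{W|U}\mapsto\E[c_{\mathsf{e}}(U,V)]$ over that set, which places it in $\mathbb{P}(\QQ^{\tau}_{W_tV_t},\mathsf{R}_t)$ and already gives $\E_{\QQ^{\sigma,\tau}}[c_{\mathsf{d}}(U_t,V_t)]\le\max_{\PP\in\mathbb{P}}\E[c_{\mathsf{d}}(U_t,V_t)]$; and (b) the reverse inequality.

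Both parts rely on the following surgery. Since $\tau$ acts on $M$ alone, $V_{t'}$ is conditionally independent of $U^{t'-1}$ given $M$, so together with Lemma~\ref{lemma:MarkovMarginal} we get $\QQ^{\tau}_{V_{t'}|W_{t'}}(\cdot\mid m,u^{t'-1})=\QQ^{\tau}_{V_{t'}|M}(\cdot\mid m)$, and hence the coordinate‑$t'$ expected cost depends on the encoder only through the single‑letter law $\QQ^{\sigma}_{M|U_{t'}}$. Write $\phi_{\mathsf{e}}(\PP_{W_t|U_t})$ and $\phi_{\mathsf{d}}(\PP_{W_t|U_t})$ for the coordinate‑$t$ expected costs under $\PP_U\,\PP_{W_t|U_t}\,\QQ^{\tau}_{V_t|W_t}$, so that $\phi_{\mathsf{e}}(\QQ^{\sigma}_{W_t|U_t})=\E_{\QQ^{\sigma,\tau}}[c_{\mathsf{e}}(U_t,V_t)]$, similarly for $c_{\mathsf{d}}$, and $\phi_{\mathsf{d}}$ is exactly the objective maximised in the statement. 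Fix $t$ and $\PP_{W_t|U_t}\in\mathbb{D}(\QQ^{\tau}_{W_t},\mathsf{R}_t)$, and put $\nu(m|u_t)=\sum_{u^{t-1}}\PP_{W_t|U_t}(m,u^{t-1}|u_t)$; summing the marginal constraint over $u^{t-1}$ shows $\sum_u\PP_U(u)\nu(\cdot|u)$ equals the $M$‑marginal induced by $\sigma$. Define the encoder $\tilde\sigma$, which uses the same message set $\{1,\dots,2^{\lfloor n\mathsf{R}\rfloor}\}$, by
\begin{align}
\QQ^{\tilde\sigma}_{M|U^n}(m|u^n)=\QQ^{\sigma}_{M|U^n}(m|u^n)+\lambda\,\big(\nu(m|u_t)-\QQ^{\sigma}_{M|U_t}(m|u_t)\big),\nonumber
\end{align}
with $\lambda\in(0,1]$. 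The correction depends on $u^n$ only through $u_t$ and averages to $0$ under $\PP_U$, so conditioning on $u_{t'}$ with $t'\neq t$ leaves $\QQ^{\sigma}_{M|U_{t'}}$ unchanged, while at coordinate $t$ the law becomes $(1-\lambda)\QQ^{\sigma}_{M|U_t}+\lambda\,\nu$. Hence every $t'\neq t$ contributes the same cost as under $\sigma$, and $c^n_{\mathsf{e}}(\tilde\sigma,\tau)=c^n_{\mathsf{e}}(\sigma,\tau)+\tfrac{\lambda}{n}\big(\phi_{\mathsf{e}}(\PP_{W_t|U_t})-\phi_{\mathsf{e}}(\QQ^{\sigma}_{W_t|U_t})\big)$, and the same identity with $\mathsf{e}$ replaced by $\mathsf{d}$.

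To conclude (a): if some $\PP_{W_t|U_t}\in\mathbb{D}(\QQ^{\tau}_{W_t},\mathsf{R}_t)$ had $\phi_{\mathsf{e}}(\PP_{W_t|U_t})<\phi_{\mathsf{e}}(\QQ^{\sigma}_{W_t|U_t})$, then $c^n_{\mathsf{e}}(\tilde\sigma,\tau)<c^n_{\mathsf{e}}(\sigma,\tau)$, contradicting $\sigma\in\textsf{BR}_{\mathsf{e}}(\tau)$; hence $\QQ^{\sigma}_{W_t|U_t}$ attains the minimum, so $\QQ^{\sigma}_{W_t|U_t}\in\mathbb{P}(\QQ^{\tau}_{W_tV_t},\mathsf{R}_t)$ and $\E_{\QQ^{\sigma,\tau}}[c_{\mathsf{d}}(U_t,V_t)]=\phi_{\mathsf{d}}(\QQ^{\sigma}_{W_t|U_t})\le\max_{\PP\in\mathbb{P}}\phi_{\mathsf{d}}(\PP)$. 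For (b): take any $\PP_{W_t|U_t}\in\mathbb{P}(\QQ^{\tau}_{W_tV_t},\mathsf{R}_t)$; then $\phi_{\mathsf{e}}(\PP_{W_t|U_t})=\phi_{\mathsf{e}}(\QQ^{\sigma}_{W_t|U_t})$, both being $\min_{\mathbb{D}(\QQ^{\tau}_{W_t},\mathsf{R}_t)}\phi_{\mathsf{e}}$, so $c^n_{\mathsf{e}}(\tilde\sigma,\tau)=c^n_{\mathsf{e}}(\sigma,\tau)$ and $\tilde\sigma\in\textsf{BR}_{\mathsf{e}}(\tau)$. Since $\sigma$ attains the pessimistic maximum \eqref{eq:MaxTie}, $c^n_{\mathsf{d}}(\tilde\sigma,\tau)\le c^n_{\mathsf{d}}(\sigma,\tau)$, i.e. $\phi_{\mathsf{d}}(\PP_{W_t|U_t})\le\phi_{\mathsf{d}}(\QQ^{\sigma}_{W_t|U_t})$; maximising over $\PP_{W_t|U_t}\in\mathbb{P}$ gives the reverse inequality, hence equality.

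The delicate point — and the step I expect to be the main obstacle — is that $\QQ^{\tilde\sigma}_{M|U^n}(\cdot|u^n)$ must remain a probability vector over the fixed alphabet $\{1,\dots,2^{\lfloor n\mathsf{R}\rfloor}\}$. Taking $\lambda$ small secures nonnegativity wherever $\QQ^{\sigma}_{M|U^n}(m|u^n)>0$, but one must still handle (or preclude — e.g. by a preliminary regularisation of $\sigma$ within $\textsf{BR}_{\mathsf{e}}(\tau)$, or by reducing to candidates whose $M$‑marginal is suitably dominated, which is enough because $\mathbb{D}$ and $\mathbb{P}$ are convex and $\QQ^{\sigma}_{W_t|U_t}$ is a minimiser) the case where $\QQ^{\sigma}_{M|U^n}(m|u^n)=0$ at a positive‑probability $u^n$ while $\QQ^{\sigma}_{M|U_t}(m|u_t)>\nu(m|u_t)$. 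It is precisely the identification $W_t=(M,U^{t-1})$ in \eqref{eq:DefAuxRV} that makes the correction orthogonal to every coordinate $t'\neq t$, so that the perturbation touches exactly one letter; turning it into an admissible encoder is where the careful bookkeeping is required.
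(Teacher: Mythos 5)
Your high-level plan coincides with the paper's: deviate at a single coordinate $t$, invoke $\sigma\in\textsf{BR}_{\mathsf{e}}(\tau)$ to place $\QQ^{\sigma}_{W_t|U_t}$ in $\mathbb{P}(\QQ^{\tau}_{W_tV_t},\mathsf{R}_t)$, and invoke the pessimistic maximum \eqref{eq:MaxTie} for the reverse inequality; your observations that the coordinate-$t'$ cost depends on the encoder only through the $(U_{t'},M)$ law, and that your correction term averages to zero at every $t'\neq t$ because the $\mathbb{D}$-constraint preserves the $W_t$-marginal, are both correct. However, the point you yourself flag as delicate is a genuine gap, not a technicality. The additive perturbation $\QQ^{\tilde\sigma}_{M|U^n}(m|u^n)=\QQ^{\sigma}_{M|U^n}(m|u^n)+\lambda\,(\nu(m|u_t)-\QQ^{\sigma}_{M|U_t}(m|u_t))$ fails to be nonnegative for \emph{every} $\lambda>0$ in the typical case: a best-response encoder puts zero probability on each message that is not cost-minimizing for the observed $u^n$, so there are positive-probability $u^n$ with $\QQ^{\sigma}_{M|U^n}(m|u^n)=0$ while $\QQ^{\sigma}_{M|U_t}(m|u_t)>0$ (it is an average over the other coordinates), and any deviation with $\nu(m|u_t)<\QQ^{\sigma}_{M|U_t}(m|u_t)$ then produces a negative entry. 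Your suggested repairs do not close this: one cannot in general regularize $\sigma$ to full support while remaining in $\textsf{BR}_{\mathsf{e}}(\tau)$ and still attaining \eqref{eq:MaxTie}, and restricting to deviations whose $M$-marginal is dominated by $\QQ^{\sigma}_{M|U_t}$ does not exhaust $\mathbb{D}(\QQ^{\sigma}_{W_t},\mathsf{R}_t)$; convexity of $\mathbb{D}$ only gives optimality along the directions you can actually realize, which are exactly the ones truncated by the support constraint.

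The paper avoids this obstacle with a different construction of the deviating encoder, and this is the missing ingredient in your argument. Instead of perturbing $\sigma$ linearly, it first builds the joint law $\PP_{U^nW^n}=\QQ^{\sigma}_{U^{t-1}U_{t+1}^nW^{t-1}W_{t+1}^n}\otimes\PP_{U_t}\PP_{W_t|U_t}$ in \eqref{eq:distributionDev0}, which keeps $\PP_{U^n}=\PP_U^{\otimes n}$, $\PP_{W_{t'}}=\QQ^{\sigma}_{W_{t'}}$ and the pair marginals $\PP_{U_{t'}W_{t'}}=\QQ^{\sigma}_{U_{t'}W_{t'}}$ for $t'\neq t$, and only then defines the encoder by Bayes inversion, $\widetilde\sigma(m|u^n)=\QQ^{\sigma}_{M}(m)\prod_{t'}\PP_{U_{t'}|MU^{t'-1}}(u_{t'}|m,u^{t'-1})/\PP_U^{\otimes n}(u^n)$ as in \eqref{eq:SigmaTilde0}: nonnegativity is automatic and normalization follows from the preserved marginals via \eqref{eq:cond3VT7}. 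The cost identity \eqref{eq:ConvEqual01}--\eqref{eq:ConvEqual04} then plays the role of your per-coordinate bookkeeping, and the two deductions you make afterwards (best response gives minimality of the encoder cost at coordinate $t$; tie-breaking optimality gives the reverse inequality for the decoder cost) are exactly the paper's. As it stands, your proof establishes the lemma only for deviations that happen to be realizable by your additive perturbation, so the key construction still needs to be supplied.
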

The proof of Lemma \ref{lemma:UnprofitableDeviation} is stated in Sec.~\ref{sec:ProofLemma42}.

\begin{lemma}\label{lemma:rate}
For all $t\in\{1,\ldots,n\}$, 
we have
\begin{align}
\sum_{t=1}^n \alpha_t \mathsf{R}_t \leq \mathsf{R}.
\end{align}
\end{lemma}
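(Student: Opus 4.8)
The plan is to express the average rate $\sum_{t=1}^n \alpha_t \mathsf{R}_t$, with $\alpha_t = \frac1n$ and $\mathsf{R}_t = I(U_t;M,U^{t-1})$, as a single mutual information term and then bound it by the cardinality of the message set. First I would write out the sum and use the chain rule for mutual information. Since $I(U_t; M, U^{t-1}) = I(U_t; U^{t-1}) + I(U_t; M \mid U^{t-1})$ and the source is memoryless, the first term vanishes: $I(U_t; U^{t-1}) = 0$. Hence $\sum_{t=1}^n \alpha_t \mathsf{R}_t = \frac1n \sum_{t=1}^n I(U_t; M \mid U^{t-1})$.

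Next I would recognize the remaining sum as a telescoping chain-rule decomposition: $\sum_{t=1}^n I(U_t; M \mid U^{t-1}) = I(U^n; M)$. This is the standard identity $I(U^n; M) = \sum_{t=1}^n I(U_t; M \mid U_1, \ldots, U_{t-1})$. Therefore $\sum_{t=1}^n \alpha_t \mathsf{R}_t = \frac1n I(U^n; M)$.

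Finally, I would bound $I(U^n; M) \leq H(M) \leq \log |\{1,\ldots,2^{\lfloor n\mathsf{R}\rfloor}\}| = \lfloor n\mathsf{R}\rfloor \leq n\mathsf{R}$, where the first inequality is $I(U^n;M) = H(M) - H(M\mid U^n) \leq H(M)$ and the second is the standard bound on entropy by the logarithm of the alphabet size (with logarithms taken to base $2$). Dividing by $n$ gives $\sum_{t=1}^n \alpha_t \mathsf{R}_t = \frac1n I(U^n;M) \leq \frac{\lfloor n\mathsf{R}\rfloor}{n} \leq \mathsf{R}$, which is exactly the claim.

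There is no real obstacle here; the argument is a routine application of the chain rule and the entropy bound. The only points requiring a little care are the use of memorylessness to kill the $I(U_t;U^{t-1})$ terms (so that the sum collapses cleanly to $I(U^n;M)$ rather than to the larger quantity $\sum_t I(U_t;M,U^{t-1})$, which would not telescope) and keeping the logarithm base consistent with the rate convention $|\mathcal{M}| = 2^{\lfloor n\mathsf{R}\rfloor}$ used in Definition~\ref{def:Code}.
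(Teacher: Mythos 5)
Your proposal is correct and follows essentially the same argument as the paper: memorylessness removes the $I(U_t;U^{t-1})$ terms, the chain rule collapses the sum to $\frac1n I(U^n;M)$, and the entropy bound $H(M)\leq \lfloor n\mathsf{R}\rfloor$ gives the result. No issues.
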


\begin{proof}[Lemma \ref{lemma:rate}]
Since $U^n$ is i.i.d. distributed, we have
\begin{align}
&\sum_{t=1}^n\alpha_t  \mathsf{R}_t =\sum_{t=1}^n \frac1nI(U_t;W_t) = \frac1n\sum_{t=1}^n I(U_t;M,U^{t-1})\nonumber\\
 =&\frac1n\sum_{t=1}^n I(U_t;M|U^{t-1})  = \frac1nI(U^n;M)\leq \frac1nH(M)\leq  \mathsf{R}.\nonumber
\end{align}
This concludes the proof of Lemma \ref{lemma:rate}.
\end{proof}

\subsection{Proof of Lemma \ref{lemma:UnprofitableDeviation}}\label{sec:ProofLemma42}

We fix $t\in\{1,\ldots,n\}$. This proof relies on the choice of the auxiliary random variable $W_{t} = (M, U^{t-1})$ in \eqref{eq:DefAuxRV}. According to \eqref{eq:RateW}, the rate constraint $I_{\QQ}(U_t;W_t)\leq \mathsf{R}_t$ is satisfied. We consider a single-letter deviation for the encoder $\PP_{W_t|U_t}\in \mathbb{D}(\QQ^{\sigma}_{W_t},\mathsf{R}_t)$, and we construct a new distribution
\begin{align}
\PP_{U^nW^n} =  \QQ^{\sigma}_{U^{t-1}U_{t+1}^nW^{t-1}W_{t+1}^n}
\otimes \PP_{U_t} \PP_{W_t|U_t}
.\label{eq:distributionDev0}
\end{align}
In this construction $(U_t,W_t)$ are independent of $(U^{t-1},U_{t+1}^n,W^{t-1},W_{t+1}^n)$, but the marginal distributions of $U^n$ and of each of the $W_{t'}$ with $t'\in\{1,\ldots,n\}$ are preserved, i.e. $\PP_{U^n} = \PP_{U}^{\otimes n}$ and $\PP_{W_{t'}} = \QQ^{\sigma}_{W_{t'}}$ for all $t'\in\{1,\ldots,n\}$. In fact, only the correlation between $U_t$ and $W_t$ has changed. We define the strategy that assign for all $(u^n,m)$
\begin{align}
\widetilde{\sigma}\big(m\big|u^n\big) =& \frac{\QQ^{\sigma}_{M}(m)}{\PP^{\otimes n}_U(u^n)} \prod_{t'=1}^n\PP_{U_{t'}|MU^{t'-1}}(u_{t'}|m,u^{t'-1}),\label{eq:SigmaTilde0}
\end{align}
where the distributions $\QQ^{\sigma}_{M}=\QQ^{\sigma}_{W_1}$, and $\PP_{U_{t'}|MU^{t'-1}} = \PP_{U_{t'}|W_{t'}}$ for all $t'\in\{1,\ldots,n\}$ are given by $\PP_{U^nW^n} $. 

The first step is to show that the strategy $\widetilde{\sigma}$ is well defined. By construction, the distribution $\PP_{U^nW^n}$ of \eqref{eq:distributionDev0} has a marginals $\PP_{U}^{\otimes n}$ and $\QQ^{\sigma}_{W_1}$. 
The chain rule ensures that
\begin{align}
&\PP_{U}^{\otimes n}(u^n) = \sum_{w_1}\PP_{U^nW_1}(u^n,w_1)\nonumber \\
=& \sum_{w_1}\QQ^{\sigma}_{W_1}(w_1) \prod_{t'=1}^n\PP_{U_{t'}|W_1U^{t'-1}}(u_{t'}|w_1,u^{t'-1}) \nonumber \\
\Longleftrightarrow  &\sum_{m}\QQ^{\sigma}_{M}(m)\prod_{t'=1}^n \PP_{U_{t'}|MU^{t'-1}}(u_{t'}|m,u^{t'-1}) =  \PP_{U}^{\otimes n}(u^n).\label{eq:cond3VT7}
\end{align}
By using \eqref{eq:cond3VT7}, we have for all $u^n\in\mc{U}^n$,
\begin{align}
&\sum_{m}\widetilde{\sigma}\big(m\big|u^n\big) \nonumber\\
=& \sum_{m} \frac{\QQ^{\sigma}_{M}(m)}{\PP^{\otimes n}_U(u^n)} \prod_{t'=1}^n \PP_{U_{t'}|MU^{t'-1}}(u_{t'}|m,u^{t'-1}) =1.
\end{align}
This ensures that $\widetilde{\sigma}$ is well defined.

The second step is to show that the strategies $(\widetilde{\sigma},\tau)$ and the distribution $\PP_{U^nW^n}$ lead to the same expected cost.
\begin{align}
&\sum_{u^n,w^n}\PP_{U^nW^n}(u^n,w^n) \nonumber\\
&\quad \times\bigg(\frac1n \sum_{t'=1}^n \sum_{v_{t'}}\QQ^{\tau}_{V_{t'}|W_{t'}}(v_{t'}|w_{t'})c_{\mathsf{e}}(u_{t'},v_{t'})\bigg)\nonumber\\
=&\sum_{u^n,m}\PP_{U^nM}(u^n,m) \nonumber\\
&\quad\times\bigg(\frac1n \sum_{t'=1}^n \sum_{v_{t'}}\QQ^{\tau}_{V_{t'}|MU^{t'-1}}(v_{t'}|m,u^{t'-1})c_{\mathsf{e}}(u_{t'},v_{t'})\bigg)\label{eq:ConvEqual01}\\
=&\sum_{u^n,m}\PP_{U^nM}(u^n,m) \bigg(\frac1n \sum_{t'=1}^n \sum_{v^n}\tau(v^n|m) c_{\mathsf{e}}(u_{t'},v_{t'})\bigg)\label{eq:ConvEqual011}\\
=&\sum_{u^n,m}\QQ^{\sigma}_{M}(m)  \prod_{t'=1}^n\PP_{U_{t'}|MU^{t'-1}}(u_{t'}|m,u^{t'-1}) \nonumber\\
&\quad\times\sum_{v^n} \tau(v^n|m) \bigg(\frac1n \sum_{t'=1}^n c_{\mathsf{e}}(u_{t'},v_{t'})\bigg)\label{eq:ConvEqual03}\\
=&\sum_{u^n,m,v^n}\PP_U^{\otimes n}(u^n)\widetilde{\sigma}(m|u^n)\tau(v^n|m) \bigg(\frac1n \sum_{t'=1}^n c_{\mathsf{e}}(u_{t'},v_{t'})\bigg).\label{eq:ConvEqual04}
\end{align}
Equation \eqref{eq:ConvEqual01} comes from the identification of the axiliary random variables $W_{t'} = (M, U^{t'-1})$, for all $t'\in\{1,\ldots,n\}$, that implies that $(U^n,W^n)=(U^n,M,U^1,\ldots,M,U^{n-1})$ induce the same expected cost as $(U^n,M)$. Equation \eqref{eq:ConvEqual011} comes from the Markov chain $U^{t'-1}  -\!\!\!\!\minuso\!\!\!\!- M  -\!\!\!\!\minuso\!\!\!\!- V^n$. 
Equation \eqref{eq:ConvEqual03} comes from \eqref{eq:cond3VT7} and the hypothesis that the marginal distribution $\PP_{W_1} = \QQ^{\sigma}_{W_1} = \QQ^{\sigma}_{M}$ is preserved. 
Equation \eqref{eq:ConvEqual04} comes from the definition of the strategy $\widetilde{\sigma}$.

However, $\sigma\in \mathsf{BR}(\tau)$, therefore \eqref{eq:ConvEqual01}-\eqref{eq:ConvEqual04} imply that
\begin{align}
&  \frac1n \sum_{t'=1}^n \bigg( \sum_{u_{t'},w_{t'},v_{t'}}\PP_{U}(u_{t'}) \PP_{W_{t'}|U_{t'}}(w_{t'}|u_{t'}) \nonumber\\
&\quad \times \QQ^{\tau}_{V_{t'}|W_{t'}}(v_{t'}|w_{t'})c_{\mathsf{e}}(u_{t'},v_{t'})\bigg)\label{eq:converseIneq22}\\
=&  \sum_{u^n,w^n}\PP_{U^nW^n}(u^n,w^n)\nonumber\\
&\quad \times \bigg(\frac1n \sum_{t'=1}^n \sum_{v_{t'}}\QQ^{\tau}_{V_{t'}|W_{t'}}(v_{t'}|w_{t'})c_{\mathsf{e}}(u_{t'},v_{t'})\bigg)\\
=& \sum_{u^n,m,v^n}\PP_U^{\otimes n}(u^n)\widetilde{\sigma}(m|u^n)\tau(v^n|m) \bigg(\frac1n \sum_{t'=1}^n c_{\mathsf{e}}(u_{t'},v_{t'})\bigg) \\
\leq & \sum_{u^n,m,v^n}\PP_U^{\otimes n}(u^n){\sigma}(m|u^n)\tau(v^n|m) \bigg(\frac1n \sum_{t'=1}^n c_{\mathsf{e}}(u_{t'},v_{t'})\bigg) \label{eq:converseIneq24}\\
=&  \sum_{u^n,w^n}\QQ^{\sigma}_{U^nW^n}(u^n,w^n)\nonumber\\
&\quad \times \bigg(\frac1n \sum_{t'=1}^n \sum_{v_{t'}}\QQ^{\tau}_{V_{t'}|W_{t'}}(v_{t'}|w_{t'})c_{\mathsf{e}}(u_{t'},v_{t'})\bigg)\\
=&  \frac1n \sum_{t'=1}^n \bigg( \sum_{u_{t'},w_{t'},v_{t'}}\PP_{U}(u_{t'}) \QQ^{\sigma}_{W_{t'}|U_{t'}}(w_{t'}|u_{t'}) \nonumber\\
&\quad \times\QQ^{\tau}_{V_{t'}|W_{t'}}(v_{t'}|w_{t'})c_{\mathsf{e}}(u_{t'},v_{t'})\bigg).\label{eq:converseIneq26}
\end{align}
By construction in \eqref{eq:distributionDev0}, $\PP_{U_{t'}W_{t'}} = \QQ^{\sigma}_{U_{t'}W_{t'}}$ for all $t'\neq t$.
Therefore, equations \eqref{eq:converseIneq22}-\eqref{eq:converseIneq26} imply 
\begin{align}
&\sum_{u_t,w_t,v_t}\PP_{U}(u_t) \PP_{W_t|U_t}(w_t|u_t)  \QQ^{\tau}_{V_t|W_t}(v_t|w_t)c_{\mathsf{e}}(u_t,v_t) \nonumber\\
\leq&  \sum_{u_t,w_t,v_t} \PP_{U}(u_t) \QQ^{\sigma}_{W_t|U_t}(w_t|u_t) \QQ^{\tau}_{V_t|W_t}(v_t|w_t)c_{\mathsf{e}}(u_t,v_t), \nonumber
\end{align}
which show that $\QQ^{\sigma}_{W_t|U_t}\in \mathbb{P}(\QQ^{\tau}_{W_tV_t},\mathsf{R}_t)$. 

By definition, $\sigma \in \textsf{BR}_{\mathsf{e}}(\tau)$  achieves the maximum in \eqref{eq:MaxTie}. We use similar arguments to show that
\begin{align}
\E_{\QQ^{\sigma,\tau}}\Big[ c_{\mathsf{d}}(U_t,V_t) \Big] = \!\!\! \max_{\QQ_{W|U}\in \mathbb{P}(\QQ^{\tau}_{W_tV_t},\mathsf{R}_t)}\E\Big[ c_{\mathsf{d}}(U_t,V_t) \Big] .
\end{align}
This concludes the proof of Lemma \ref{lemma:UnprofitableDeviation}.

\subsection{Cardinality bounds for the auxiliary random variables}

According to \cite[Corollary~17.1.5, pp.~157]{rockafellar1970convex}, the convexification 
of the function $C$ can be obtained by minimizing over convex combinations of at most $|\mc{T}|=2$ points.

\begin{lemma}\label{lemma:CardT}[from Corollary~17.1.5, pp.~157 in \cite{rockafellar1970convex}]
\begin{align}
&\min_{(\tilde{\alpha}_t)_t\geq0,\atop (\tilde{\mathsf{R}}_t)_t\geq0}\bigg\{\sum_t {\tilde{\alpha}}_t  C(\tilde{\mathsf{R}}_t)\;\; \bigg|\;\;  \sum_{t}\tilde{\alpha}_t = 1,  \sum_{t}\tilde{\alpha}_t \tilde{\mathsf{R}}_t \leq \mathsf{R}\bigg\}\nonumber \\
=& \min_{\alpha\in[0,1], \mathsf{R}_1\geq0, \mathsf{R}_2\geq0,\atop \alpha \mathsf{R}_1 +(1- \alpha) \mathsf{R}_2 \leq \mathsf{R}} \!\!\alpha C(\mathsf{R}_1) + (1- \alpha)C(\mathsf{R}_2) .
\end{align}
\end{lemma}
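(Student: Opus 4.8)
The plan is to read both sides of the claimed identity as the value of one and the same minimization of the ``cost coordinate'' over a slice of the convex hull of the epigraph of $C$, and then to invoke the refinement of Carathéodory's theorem for connected sets, \cite[Corollary~17.1.5, pp.~157]{rockafellar1970convex}, in dimension two. One inequality is immediate: the right-hand side minimizes the same objective under the same rate budget, but only over the subfamily of mixtures supported on at most two points, hence it is at least the left-hand side. The whole content is therefore the reverse inequality --- that two mixture points always suffice --- and this is where I would do the work.

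First I would introduce the epigraph $\mc{G} = \{(\mathsf{r},c)\in\R^2 : \mathsf{r}\geq 0,\ c\geq C(\mathsf{r})\}$. Since $c_{\mathsf{d}}$ is bounded on the finite set $\mc{U}\times\mc{V}$, the function $C$ is bounded, so $\mc{G}$ is path-connected: any two of its points are joined by first moving vertically up to a common height exceeding $\sup_{\mathsf{r}}C(\mathsf{r})$ and then moving horizontally, both segments staying in $\mc{G}$. Now take an arbitrary finitely supported mixture $\big((\tilde\alpha_t,\tilde{\mathsf{R}}_t)\big)_t$ feasible for the left-hand side and set $P = \big(\sum_t\tilde\alpha_t\tilde{\mathsf{R}}_t,\; \sum_t\tilde\alpha_t C(\tilde{\mathsf{R}}_t)\big)$; since each $(\tilde{\mathsf{R}}_t,C(\tilde{\mathsf{R}}_t))\in\mc{G}$, we have $P\in\conv(\mc{G})$. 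Applying \cite[Corollary~17.1.5, pp.~157]{rockafellar1970convex} to the connected set $\mc{G}\subseteq\R^2$ expresses $P = \beta_1(\mathsf{r}_1,c_1)+\beta_2(\mathsf{r}_2,c_2)$ with $\beta_1,\beta_2\geq 0$, $\beta_1+\beta_2 = 1$, $\mathsf{r}_i\geq 0$ and $c_i\geq C(\mathsf{r}_i)$. Reading off the first coordinate gives $\beta_1\mathsf{r}_1+\beta_2\mathsf{r}_2 = \sum_t\tilde\alpha_t\tilde{\mathsf{R}}_t\leq\mathsf{R}$, so $(\beta_1,\mathsf{r}_1,\mathsf{r}_2)$ is feasible for the right-hand side; reading off the second coordinate and using $c_i\geq C(\mathsf{r}_i)$ gives $\beta_1 C(\mathsf{r}_1)+\beta_2 C(\mathsf{r}_2)\leq\beta_1 c_1+\beta_2 c_2 = \sum_t\tilde\alpha_t C(\tilde{\mathsf{R}}_t)$. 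Hence the right-hand minimum is no larger than $\sum_t\tilde\alpha_t C(\tilde{\mathsf{R}}_t)$, and since this holds for every mixture feasible on the left, the right-hand side is $\leq$ the left-hand side. Together with the trivial inequality this yields the equality. The attainment of the two minima I would treat as routine: $C$ is bounded and, in the large-rate ``perfect transmission'' regime, constant, so one may restrict the two rates to a compact interval and use lower semicontinuity of $C$.

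The step I expect to require the most care is the connectedness claim for $\mc{G}$: the plain Carathéodory theorem in $\R^2$ would only yield three mixture points, and it is precisely connectedness that brings this down to the two points asserted in the lemma --- equivalently, to the two rate regimes appearing in Theorem~\ref{theo:main}. Working with the epigraph of $C$ rather than its graph is what makes this connectedness automatic, so that no continuity or monotonicity property of $C$ needs to be established for this particular argument.
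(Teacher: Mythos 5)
Your argument is correct, but it takes a genuinely different route from the paper. The paper proves this lemma by a direct appeal to Rockafellar's Corollary~17.1.5, which is the Carath\'eodory-type representation of the \emph{convexification of a function} on $\R^n$ by convex combinations of at most $n+1$ points; applied with $n=1$ (the rate is scalar), this immediately gives two points, and the inequality budget $\sum_t\tilde{\alpha}_t\tilde{\mathsf{R}}_t\leq\mathsf{R}$ is absorbed by minimizing the convexified $C$ over rates below $\mathsf{R}$. You instead lift the problem to the epigraph of $C$ in $\R^2$ and invoke the Fenchel--Bunt refinement (a connected set in $\R^n$ needs only $n$ points, not $n+1$), which you correctly observe is the only way the epigraph route yields two points rather than three; your path-connectedness argument for the epigraph (using boundedness of $c_{\mathsf{d}}$, hence of $C$) and the coordinate-reading step are sound, and the easy inclusion inequality is handled correctly. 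Two caveats: first, the result you are actually using is \emph{not} Rockafellar's Corollary~17.1.5 --- that corollary is the function-convexification statement the paper uses, while the connected-set refinement is the Fenchel--Bunt theorem, which does not appear in \cite{rockafellar1970convex}, so your proof needs a different citation (or a short self-contained proof of Fenchel--Bunt); second, your closing remark about attainment leans on lower semicontinuity of $C$, which is not established anywhere and is unnecessary --- the identity is really an equality of infima, and your main argument already proves that, so it would be cleaner to drop the lsc claim (the observation that $C$ is constant for $\mathsf{R}\geq H(U)$ is fine). What each approach buys: the paper's is a one-line citation requiring no structural property of $C$ whatsoever; yours is more geometric and essentially self-contained, at the price of the boundedness/connectedness step and a correct reference for the $n$-point refinement.
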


By building on Caratheodory's Theorem, it is possible to reduce the cardinality of $W$ while preserving: 1. the optimality property of the encoder, and 2. the tie-breaking rule.

\begin{lemma}\label{lemma:CardW}
Given $\mathsf{R}$, 
for all distribution $\QQ_{\widetilde{W}V}$, there exists $\QQ_{{W}V}$ with $|\mc{W}|= |\mc{U}|+3$ such that 
\begin{align}
\min_{\QQ_{\widetilde{W}|U}\in \mathbb{D}(\QQ_{\widetilde{W}},\mathsf{R})}\; \E\Big[ c_{\mathsf{e}}(U,V) \Big]=&\min_{\QQ_{W|U}\in \mathbb{D}(\QQ_{W},\mathsf{R})}\; \E\Big[ c_{\mathsf{e}}(U,V) \Big],\\
\max_{\QQ_{\widetilde{W}|U}\in \mathbb{P}(\QQ_{\widetilde{W}V},\mathsf{R})}\; \E\Big[ c_{\mathsf{d}}(U,V) \Big]=&\max_{\QQ_{W|U}\in \mathbb{P}(\QQ_{WV},\mathsf{R})}\; \E\Big[ c_{\mathsf{d}}(U,V) \Big].
\end{align}
\end{lemma}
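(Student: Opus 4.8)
The plan is to apply the Fenchel--Eggleston--Carathéodory strengthening of Carathéodory's theorem (any point in the convex hull of a connected set in $\R^d$ is a convex combination of at most $d$ extreme points) to the distribution $\QQ_{\widetilde W}$, viewed as a mixture over $\widetilde w$ of the conditional ``atoms'' $\QQ_{U|\widetilde W}(\cdot|\widetilde w)$ together with the attached decoder kernel $\QQ_{V|\widetilde W}(\cdot|\widetilde w)$ and the local rate contribution. Concretely, for each $\widetilde w\in\widetilde{\mc W}$ I would form the vector
\begin{align}
\phi(\widetilde w) = \Big(\, \QQ_{U|\widetilde W}(\cdot|\widetilde w)\in\Delta(\mc U),\;\; \E[c_{\mathsf e}(U,V)\,|\,\widetilde W=\widetilde w],\;\; \E[c_{\mathsf d}(U,V)\,|\,\widetilde W=\widetilde w],\;\; H(U|\widetilde W=\widetilde w)\,\Big),
\end{align}
which lives in a space of dimension $(|\mc U|-1)+1+1+1=|\mc U|+2$, and average it against $\QQ_{\widetilde W}$. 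Averaging the first block reproduces $\PP_U$ (which must be preserved, since $\QQ_{\widetilde W|U}\in\mathbb D$ forces the $U$-marginal to be $\PP_U$); averaging the second and third blocks reproduces $\E_{\QQ}[c_{\mathsf e}(U,V)]$ and $\E_{\QQ}[c_{\mathsf d}(U,V)]$; and averaging the last block reproduces $H(U|\widetilde W)$, hence also $I_{\QQ}(U;\widetilde W)=H(U)-H(U|\widetilde W)$. So it suffices to reduce the support to $|\mc U|+2$, but one extra coordinate is needed to keep the encoder's \emph{optimization} (not just one feasible value) intact — see the next paragraph — which is where the ``$+3$'' comes from.

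The key subtlety, which I expect to be the main obstacle, is that $\mathbb P(\QQ_{WV},\mathsf R)$ is itself defined as an $\argmin$ of the encoder cost over $\mathbb D(\QQ_W,\mathsf R)$, so I must show that the reduced $\QQ_{WV}$ has \emph{the same encoder optimal value} and that the attained decoder value is still the \emph{max} over the reduced compatible set — not merely that the original optimal point survives. The way to handle this is to run Carathéodory in two coupled steps. First, fix the encoder-optimal deviation $\PP_{\widetilde W|U}^{\ast}$ achieving $\min_{\mathbb D}\E[c_{\mathsf e}]$ that also achieves the decoder max in \eqref{eq:SetCond_3}; append to $\phi(\widetilde w)$ a further coordinate recording $\sum_u \PP_U(u)\PP^{\ast}_{\widetilde W|U}(\widetilde w|u)\,\E[c_{\mathsf e}(U,V)|\widetilde W=\widetilde w, U=u,\ldots]$, i.e.\ the encoder cost \emph{evaluated at the deviation}, whose $\QQ_{\widetilde W}$-average is the left-hand optimal value. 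This is one additional real constraint, giving $|\mc U|+2+1=|\mc U|+3$ coordinates and hence a support of size $|\mc U|+3$. Carathéodory then produces $\QQ_{WV}$ preserving: the $U$-marginal $\PP_U$; the rate $I(U;W)=I(U;\widetilde W)\le\mathsf R$; the encoder optimal value; the deviation's encoder cost; and the decoder value $\E[c_{\mathsf d}]$.

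Finally I would close the two displayed equalities. For the encoder identity: the reduced $\QQ_W$ has the same $U$-marginal and the set $\mathbb D(\QQ_W,\mathsf R)$ is determined only by $\PP_U$, $\QQ_W$ and $\mathsf R$, while the appended coordinate guarantees that the preserved deviation attains, on the reduced problem, exactly the original optimal value — and a standard argument (any deviation feasible for the reduced problem lifts, via composition with the reduced $\QQ_{W}$ mixture, to one feasible for the original, by the data-processing/convexity structure of $I(U;W)$ and the linearity of $\E[c_{\mathsf e}]$ in $\QQ_{W|U}$) shows it cannot be beaten; hence the two minima coincide. For the decoder identity: since $\mathbb P(\QQ_{WV},\mathsf R)\subseteq\mathbb D(\QQ_W,\mathsf R)$ is the $\argmin$ face, the same preservation shows the attained $\E[c_{\mathsf d}]$ is still in the reduced argmin set, and it equals the original max by \eqref{eq:SetCond_3}; that the reduced max is no larger follows because any reduced-compatible $\QQ_{W|U}$ again lifts to an original-compatible one with the same decoder cost. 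I would write the lifting map explicitly as $\PP_{W|U}\mapsto$ its mixture against the fixed $\QQ_{W}$-decomposition, check it preserves feasibility and both cost functionals, and conclude. The only place where genuine care is needed is verifying that this lifting is onto the relevant faces in both directions; everything else is bookkeeping with the coordinates of $\phi$.
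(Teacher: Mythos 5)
Your choice of functionals to preserve is essentially the right one, and matches the paper's: the per-$w$ vector $\bigl(\PP_{U|W}(\cdot|w),\ \sum_u h(p_{uw}),\ \sum_u p_{uw}c^{\mathsf e}_{uw},\ \sum_u p_{uw}c^{\mathsf d}_{uw},\ 1\bigr)$ in a space of dimension $|\mc U|+3$, with the reduction performed by moving the weight vector $\lambda=\QQ_W$ along a kernel direction $\mu$. The genuine gap is in how you argue that the \emph{optimal values} of the nested optimizations are preserved. Your step (i) (the retained deviation is still feasible for the reduced problem and attains the old value) is fine once you parametrize by $\PP_{U|W}$, but your step (ii) --- ``any deviation feasible for the reduced problem lifts, via composition with the reduced $\QQ_W$ mixture, to one feasible for the original with the same cost'' --- does not work. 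A deviation in $\mathbb{D}(\widetilde{\QQ}_{W},\mathsf R)$ must be transported to one whose $W$-marginal is the \emph{original} $\QQ_{\widetilde W}$, and any kernel $K(\widetilde w|w)$ with $K\circ\widetilde{\lambda}=\lambda$ replaces the cost coefficients $c^{\mathsf e}_{uw}=\sum_v\QQ_{V|W}(v|w)c_{\mathsf e}(u,v)$ by averages $\sum_{\widetilde w}K(\widetilde w|w)c^{\mathsf e}_{u\widetilde w}$, which are in general different (data processing controls $I(U;W)$ but says nothing about the linear cost); the only cost-preserving lift is the identity, which violates the marginal constraint. So you have not excluded that the reduced set $\mathbb{D}$ contains a strictly better encoder deviation, nor that the reduced argmin face $\mathbb{P}$ contains an element with a larger decoder cost --- and that is precisely what the lemma must rule out. (Your extra appended coordinate also appears to duplicate the encoder-cost coordinate already present, so it does not supply the missing argument.)

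The paper closes this gap by duality rather than by a feasible-point correspondence: it writes the encoder's problem as an ordinary convex program in $p_{uw}=\PP_{U|W}(u|w)$ with weights $\lambda$, takes a solution $p^{\star}$ with KKT multipliers $(\nu^{\star}_1,\nu^{\star}_2,\nu^{\star}_3)$, and observes that after the shift $\widetilde{\lambda}=\lambda-\gamma\mu$ the \emph{same} $(p^{\star},\nu^{\star}_1,\nu^{\star}_2,\nu^{\star}_3)$ is still a saddle point of the Lagrangian: the first-order conditions hold separately for each pair $(u,w)$, hence survive multiplication by $\widetilde{\lambda}_w\ge 0$, and primal feasibility together with complementary slackness are preserved because $\mu$ annihilates the marginal, entropy, encoder-cost and decoder-cost coordinates. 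This certifies that $p^{\star}$ remains globally optimal for the reduced program, so the two minima coincide; the pessimistic tie-breaking equality is then handled by a second convex program (maximize $\sum_{w}\lambda_w\sum_u p_{uw}c^{\mathsf d}_{uw}$ subject to the same constraints plus encoder cost at most $C^{\star}_{\mathsf e}$) and the same KKT-preservation argument, after which the reduction is iterated with $\gamma$ at the boundary of its admissible interval until $|\mc W|=|\mc U|+3$. To repair your proposal, replace the lifting step by such an optimality-certificate argument; the coordinate bookkeeping you set up can then stay essentially as is.
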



\subsection{Proof of Lemma \ref{lemma:CardW}}\label{sec:proofLemmaCardW}
\subsubsection{Convex Optimization Problem}
We consider the set 
\begin{align}
&\mathbb{P}(\QQ_{WV},\mathsf{R})  \nonumber\\
=&  \underset{\PP_{W|U}}{\argmin}  \bigg\{ \E[c_{\mathsf{e}}(U,V) ] \bigg| I_{\PP}(U;W)   \leq  \mathsf{R}, \PP_{W} = \QQ_{W} \bigg\}.\label{eq:OptimizationPb}
\end{align}
All convex combinations of distributions $\PP_{W|U}^1$ and $\PP_{W|U}^2$ in $\mathbb{P}(\QQ_{WV},\mathsf{R}) $ have the same marginal $\QQ_W$,  the same values for $\E[c_{\mathsf{e}}(U,V) ]$, and satisfy the information constraint $I_{\PP}(U;W)   \leq  \mathsf{R}$. Therefore $\mathbb{P}(\QQ_{WV},\mathsf{R}) $ is a convex set.

We define the function $h:\R \to (-\infty,+\infty]$ by:
\begin{align}
h : (x) \mapsto 
\begin{cases}
x \log_2(x)&\text{ if }x >0,\\
0&\text{ if }x =0,\\
+\infty&\text{ otherwise. }
\end{cases}
\end{align}
We introduce the notation $\lambda = \QQ_W\in\Delta(\mc{W})$, the cost $c^{\mathsf{e}}_{uw} = \sum_{v}\QQ_{V|W}(v|w) c_{\mathsf{e}}(u,v)$ and the parameter $p\in \R^{|\mc{U}\times\mc{W}|}$ where $p_{uw}=\PP_{U|W}(u|w)$. Without loss of generality, we assume that $\mc{W} = \supp \lambda$. We reformulate the optimization problem \eqref{eq:OptimizationPb}.
\begin{align}
 \min_{p} \;\;\, &\sum_w \lambda_w\;\sum_u p_{uw}\;c^{\mathsf{e}}_{uw} \label{eq:OptimizationPb2}\\
\;\;\text{s.t.} \quad  &\sum_w \lambda_w \;p_{uw} - \PP_{U}(u) = 0, \qquad \forall u\in\mc{U},\label{eq:cons1}\\
\phantom{\;\;\text{s.t.}} \quad  &\sum_u \lambda_w\;p_{uw} - \lambda_w =0, \qquad\quad\; \forall w\in\mc{W},\label{eq:cons2}\\
\phantom{\;\;\text{s.t.}} \quad  &\sum_w \lambda_w \sum_u h(p_{uw})+H(U) - \mathsf{R} \leq 0.\label{eq:cons3}
\end{align}
The function $h$ is proper lower semicontinuous convex, thus \eqref{eq:OptimizationPb2} is an ordinary convex optimization problem \cite[pp.~273]{rockafellar1970convex}.

Lagrangian has multipliers $\nu_1\in\R^{|\mc{U}|}$, $\nu_2\in\R^{|\mc{W}|}$, $\nu_3\geq0$. 
\begin{align}
&\mc{L}^{\lambda}(p,\nu_1,\nu_2,\nu_3)\nonumber\\
=& \sum_w \lambda_w\;\sum_u p_{uw} \; c^{\mathsf{e}}_{uw}  +  \sum_u \nu_{1,u} \Big(\sum_w \lambda_w \; p_{uw} - \PP_U(u) \Big) \nonumber\\
&+ \sum_w\nu_{2,w} \Big(\sum_u \lambda_w  \; p_{uw} - \lambda_w  \Big) \nonumber\\
&+  \nu_3 \Big(\sum_w \lambda_w \sum_u h(p_{uw})+H(U) - \mathsf{R}\Big)\\
=& \sum_{u,w} \lambda_w \Big( p_{uw}  \big(c^{\mathsf{e}}_{uw}  +   \nu_{1,u}   + \nu_{2,w}\big)  +  \nu_3 h(p_{uw}) \nonumber\\
& - \PP_U(u) (\nu_{1,u} + \nu_{2,w}) + \nu_3 \big(  H(U) - \mathsf{R} \big) \Big).\label{eq:Lagrangian1}
\end{align}

We introduce the subdifferential $\partial f(p) $ of a function $f$, see \cite[pp.~215]{rockafellar1970convex}. The first-order conditions writes
\begin{align}
0 \in& \partial\mc{L}^{\lambda}(p,\nu_1,\nu_2,\nu_3),\nonumber\\
\Longleftrightarrow\;\; 0 \in&  \sum_{u,w} \lambda_w \partial \Big( p_{uw}  \big(c^{\mathsf{e}}_{uw}  +   \nu_{1,u}   + \nu_{2,w}\big)  +  \nu_3 h(p_{uw})\Big),\label{eq:subDiff_2}\\
\Longleftrightarrow\;\; 0 \in& \lambda_w \Big( c^{\mathsf{e}}_{uw} + \nu_{1,u}+ \nu_{2,w} +\nu_3\; \partial h\big(p_{uw}\big)\Big),\;\;\; \forall (u,w),\label{eq:subDiff3}
\end{align}
where  \eqref{eq:subDiff_2} comes from \cite[Theorem~23.8, pp.~223]{rockafellar1970convex}, and \eqref{eq:subDiff3} comes from the fact that the function $f_{uw} : p \mapsto p_{uw}  \big(c^{\mathsf{e}}_{uw}  +   \nu_{1,u}   + \nu_{2,w}\big)  +  \nu_3 h(p_{uw})$ is constant with respect to the other parameters $p_{s'w'}$ with $(s',w')\neq (u,w)$.

Suppose that $p^{\star}$ is a solution of \eqref{eq:OptimizationPb2}. By \cite[Theorem~28.3, pp.~281]{rockafellar1970convex}, there exists multipliers such that $(p^{\star}, \nu^{\star}_1,\nu^{\star}_2,\nu^{\star}_3)$ is a saddle-point of the Lagrangian, i.e. for all $(p,\nu_1,\nu_2,\nu_3)$,
\begin{align}
&\mc{L}^{\lambda}(p^{\star}, \nu_1,\nu_2,\nu_3)
 \leq \mc{L}^{\lambda}(p^{\star}, \nu^{\star}_1,\nu^{\star}_2,\nu^{\star}_3) \leq \mc{L}^{\lambda}(p, \nu^{\star}_1,\nu^{\star}_2,\nu^{\star}_3).\nonumber
\end{align}
$(p^{\star}, \nu^{\star}_1,\nu^{\star}_2,\nu^{\star}_3)$ is a saddle-point if and only if it satisfies the first-order conditions \eqref{eq:subDiff3} 
and Karush-Kuhn-Tucker conditions
\begin{align}
&\sum_w \lambda_w \;p^{\star}_{uw} - \PP_{U}(u) = 0, \qquad \forall u\in\mc{U},\label{eq:KKT_1}\\
&\sum_u \lambda_w\;p^{\star}_{uw} - \lambda_w =0, \qquad\quad\;\; \forall w\in\mc{W},\label{eq:KKT_2}\\
&\sum_w \lambda_w \sum_u h(p^{\star}_{uw})+H(U) - \mathsf{R} \leq 0,\label{eq:KKT_3}\\
&\nu^{\star}_3 \bigg(\sum_w \lambda_w \sum_u h(p^{\star}_{uw})+H(U) - \mathsf{R}\bigg) = 0.\label{eq:KKT_4}
\end{align}

\subsubsection{Caratheodory reduction}\label{sec:Caratheodory}

We denote by $p^{\star}$ a solution of \eqref{eq:OptimizationPb2} and we introduce the notation $c^{\mathsf{d}}_{uw} = \sum_{v}\QQ_{V|W}(v|w) c_{\mathsf{d}}(u,v)$. We define the family $F= (x_w)_{w\in\mc{W}}$,
\begin{align}
\!\!\!x_w = \Big(p^{\star}_{uw}, \sum_u h(p^{\star}_{uw}), \sum_u p^{\star}_{uw} \; c^{\mathsf{e}}_{uw},  \sum_u p^{\star}_{uw} \; c^{\mathsf{d}}_{uw},1\Big).\label{eq:FamilyX}
\end{align}
The family $F$ belongs to $\Delta(\mc{U})\times \R^4$ of dimension $|\mc{U}|+3$. 

Suppose that $|\mc{W}|>|\mc{U}|+3$, then the family $F$ has more elements than the dimension of the vector space $\R^{|\mc{U}|+3}$. Since it is composed of linearly dependent vectors, there exists $\mu\in\R^{|\mc{W}|}$ such that  $\mu\neq 0$ and 
\begin{align}
0 = \sum_w \mu_w \Big(p^{\star}_{uw}, \sum_u h(p^{\star}_{uw}), \sum_u p^{\star}_{uw} \; c^{\mathsf{e}}_{uw},  \sum_u p^{\star}_{uw} \; c^{\mathsf{d}}_{uw},1\Big) .\label{eq:MuZero}
\end{align}

Inspired by the proof of Caratheodory's Theorem in \cite[Theorem~17.1, pp.155]{rockafellar1970convex}, we take $\gamma\in \R$ such that $\gamma \mu \leq \lambda$. We denote by $\mc{W}^+$ the subset of $w\in\mc{W}$ such that $\mu_w>0$ and $\mc{W}^-$ the subset of $w\in\mc{W}$ such that $\mu_w<0$. Therefore, 
\begin{align}
\gamma \in \bigg[\;\max_{w\in\mc{W}^-} \frac{\lambda_w}{\mu_w}\; , \; \; \min_{w\in\mc{W}^+} \frac{\lambda_w}{\mu_w}\; \;\bigg].\label{eq:IntervalGamma}
\end{align}
We define $\widetilde{\lambda} = \lambda - \gamma\mu$ which is positive due to $\gamma \mu \leq \lambda$. The last component in \eqref{eq:MuZero} ensures $\sum_w \mu_w =0$, thus $\sum_w\widetilde{\lambda}_w = \sum_w\lambda_w=1$, which implies that $\widetilde{\lambda} \in\Delta(\mc{W})$. Note that $\supp \widetilde{\lambda} \subset \supp \lambda$ and this inclusion is strict when $\gamma$ is at the boundary of the interval in \eqref{eq:IntervalGamma}.

We reformulate the optimization problem \eqref{eq:OptimizationPb2} with $\widetilde{\lambda}$. 
\begin{align}
 \min_{p} \;\;\, &\sum_w \widetilde{\lambda}_w\;\sum_u p_{uw}\;c^{\mathsf{e}}_{uw} \label{eq:OptimizationPb3}\\
\;\;\text{s.t.} \quad  &\sum_w \widetilde{\lambda}_w \;p_{uw} - \PP_{U}(u) = 0, \qquad \forall u\in\mc{U},\\
\phantom{\;\;\text{s.t.}} \quad  &\sum_u \widetilde{\lambda}_w\;p_{uw} - \widetilde{\lambda}_w =0, \qquad\quad\;\; \forall w\in\mc{W},\\
\phantom{\;\;\text{s.t.}} \quad  &\sum_w \widetilde{\lambda}_w \sum_u h(p_{uw})+H(U) - \mathsf{R} \leq 0.
\end{align}

We show that the solution $(p^{\star}, \nu^{\star}_1,\nu^{\star}_2,\nu^{\star}_3)$ that satisfy \eqref{eq:subDiff3} and \eqref{eq:KKT_1}-\eqref{eq:KKT_4} is also a saddle-point of the Lagrangian $\mc{L}^{\tilde{\lambda}}(p,\nu_1,\nu_2,\nu_3)$ of problem \eqref{eq:OptimizationPb3}. 
For all $(u,w)\in\mc{U}\times \mc{W}$, first-order conditions  are satisfied
\begin{align}
0 \in& \lambda_w \Big( c^{\mathsf{e}}_{uw} + \nu^{\star}_{1,u}+ \nu^{\star}_{2,w} +\nu^{\star}_3\; \partial h\big(p^{\star}_{uw}\big)\Big),\label{eq:subDiff30}\\
\Longrightarrow\qquad 0 \in& \widetilde{\lambda}_w \Big( c^{\mathsf{e}}_{uw} + \nu^{\star}_{1,u}+ \nu^{\star}_{2,w} +\nu^{\star}_3\; \partial h\big(p^{\star}_{uw}\big)\Big)\label{eq:subDiff40}.
\end{align}
By definition $\widetilde{\lambda} = \lambda - \gamma\mu$ and $\mu$ satisfies \eqref{eq:MuZero}. Thus the Karush-Kuhn-Tucker conditions are also satisfied with respect to $\widetilde{\lambda}$.
\begin{align}
&\sum_w \widetilde{\lambda}_w \;p^{\star}_{uw} - \PP_{U}(u) = 0, \qquad \forall u\in\mc{U},\label{eq:KKT_1}\\
&\sum_u \widetilde{\lambda}_w\;p^{\star}_{uw} - \widetilde{\lambda}_w =0, \qquad\quad\;\; \forall w\in\mc{W},\label{eq:KKT_2}\\
&\sum_w \widetilde{\lambda}_w \sum_u h(p^{\star}_{uw})+H(U) - \mathsf{R} \leq 0,\label{eq:KKT_3}\\
&\nu^{\star}_3 \bigg(\sum_w \widetilde{\lambda}_w \sum_u h(p^{\star}_{uw})+H(U) - \mathsf{R}\bigg) = 0.\label{eq:KKT_4}
\end{align}
By \cite[Theorem~28.3, pp.~281]{rockafellar1970convex}, \eqref{eq:subDiff40} and \eqref{eq:KKT_1}-\eqref{eq:KKT_4} implies that $(p^{\star}, \nu^{\star}_1,\nu^{\star}_2,\nu^{\star}_3)$ is a saddle-point of the Lagrangian $\mc{L}^{\widetilde{\lambda}}(p,\nu_1,\nu_2,\nu_3)$.  Thus $p^{\star}$ is an optimal solution to both problems \eqref{eq:OptimizationPb2} and \eqref{eq:OptimizationPb3}. 
Moreover, optimal values also coincide 
\begin{align}
\sum_w \widetilde{\lambda}_w\;\sum_u p^{\star}_{uw} \;c^{\mathsf{e}}_{uw} =  \sum_w \lambda_w \sum_u p^{\star}_{uw} \;c^{\mathsf{e}}_{uw} = C_{\mathsf{e}}^{\star}.
\end{align}


\subsubsection{Tie-breaking rule}

The pessimistic tie-breaking rule of \eqref{eq:DefinitionFunctionC} reformulates as a convex optimization problem.
\begin{align}
 \max_{p} \;\;\, &\sum_w \lambda_w\;\sum_u p_{uw}\;c^{\mathsf{d}}_{uw} \label{eq:OptimizationPb5}\\
\;\;\text{s.t.} \quad  &\sum_w \lambda_w\;\sum_u p_{uw}\;c^{\mathsf{e}}_{uw} - C_{\mathsf{e}}^{\star}\leq0,\\
\;\;\text{and} \quad &p\text{ satisfy \eqref{eq:cons1}, \eqref{eq:cons2}, \eqref{eq:cons3}.}
\end{align}

We use the arguments of Sec.~\ref{sec:Caratheodory} to show that if $p^{\star}$ is a solution to \eqref{eq:OptimizationPb5} with respect to $\lambda$, then it is also a solution with respect to $\tilde{\lambda}$. By taking $\gamma$ at the boundary of the interval \eqref{eq:IntervalGamma}, at least one component of $\tilde{\lambda}$ is zero. By using this argument repeatedly, we show that it is enough to consider $|\mc{W}|=|\mc{U}|+3$. This concludes the proof of Lemma~\ref{lemma:CardW}.



\begin{thebibliography}{10}
\providecommand{\url}[1]{#1}
\csname url@samestyle\endcsname
\providecommand{\newblock}{\relax}
\providecommand{\bibinfo}[2]{#2}
\providecommand{\BIBentrySTDinterwordspacing}{\spaceskip=0pt\relax}
\providecommand{\BIBentryALTinterwordstretchfactor}{4}
\providecommand{\BIBentryALTinterwordspacing}{\spaceskip=\fontdimen2\font plus
\BIBentryALTinterwordstretchfactor\fontdimen3\font minus
  \fontdimen4\font\relax}
\providecommand{\BIBforeignlanguage}[2]{{%
\expandafter\ifx\csname l@#1\endcsname\relax
\typeout{** WARNING: IEEEtran.bst: No hyphenation pattern has been}%
\typeout{** loaded for the language `#1'. Using the pattern for}%
\typeout{** the default language instead.}%
\else
\language=\csname l@#1\endcsname
\fi
#2}}
\providecommand{\BIBdecl}{\relax}
\BIBdecl

\bibitem{Lapidoth97}
A.~Lapidoth, ``On the role of mismatch in rate distortion theory,'' \emph{IEEE
  Transactions on Information Theory}, vol.~43, no.~1, pp. 38--47, Jan. 1997.

\bibitem{Shannon59}
C.~E. Shannon, ``Coding theorems for a discrete source with a fidelity
  criterion,'' \emph{IRE Nat. Conv. Rec}, vol.~4, no. 142-163, p.~1, 1959.

\bibitem{stackelberg-book-2011}
H.~von Stackelberg, \emph{Market Structure and Equilibrium}.\hskip 1em plus
  0.5em minus 0.4em\relax Springer-Verlag Berlin and Heidelberg, 2011.

\bibitem{VoraKulkarni_ISIT20}
A.~S. {Vora} and A.~A. {Kulkarni}, ``Achievable rates for strategic
  communication,'' in \emph{2020 IEEE International Symposium on Information
  Theory (ISIT)}, 2020, pp. 1379--1384.

\bibitem{JacksonSonnenschein07}
M.~O. Jackson and H.~F. Sonnenschein, ``Overcoming incentive constraints by
  linking decisions,'' \emph{Econometrica}, vol.~75, no.~1, pp. 241 -- 257,
  Jan. 2007.

\bibitem{tirole-book-1991}
D.~Fudenberg and J.~Tirole, ``Game theory,'' \emph{The MIT Press}, 1991.

\bibitem{KanabarScarlett24}
M.~Kanabar and J.~Scarlett, ``Mismatched rate-distortion theory: Ensembles,
  bounds, and general alphabets,'' \emph{IEEE Transactions on Information
  Theory}, vol.~70, no.~3, pp. 1525--1539, 2024.

\bibitem{ScarlettGuillenSomekhMartinez_FTCIT20}
J.~Scarlett, A.~{Guill\'{e}n~i~F\`{a}bregas}, A.~Somekh-Baruch, and
  A.~Martinez, ``Information-theoretic foundations of mismatched decoding,''
  \emph{Foundations and Trends in Communications and Information Theory},
  vol.~17, no. 2--3, pp. 149--401, 2020.

\bibitem{Somekh22}
A.~Somekh-Baruch, ``A single-letter upper bound on the mismatch capacity via
  multicast transmission,'' \emph{IEEE Transactions on Information Theory},
  vol.~68, no.~5, pp. 2801--2812, 2022.

\bibitem{KangarshahiFabregas21}
E.~Asadi~Kangarshahi and A.~{Guill\'{e}n~i~F\`{a}bregas}, ``A single-letter
  upper bound to the mismatch capacity,'' \emph{IEEE Transactions on
  Information Theory}, vol.~67, no.~4, pp. 2013--2033, 2021.

\bibitem{LeTreustTomala_ISIT24}
M.~{Le~Treust} and T.~Tomala, ``A converse bound on the mismatched
  distortion-rate function,'' in \emph{2024 IEEE International Symposium on
  Information Theory (ISIT)}, 2024, pp. 867--872.

\bibitem{rockafellar1970convex}
R.~Rockafellar, \emph{Convex Analysis}, ser. Princeton landmarks in mathematics
  and physics.\hskip 1em plus 0.5em minus 0.4em\relax Princeton University
  Press, 1970.

\end{thebibliography}

\end{document}